\documentclass[manuscript,nonacm]{acmart}

 
\acmConference[]{}{}{}
\acmBooktitle{}

\usepackage{url}
\usepackage{ascmac}
\usepackage{fancybox}
\usepackage{amsthm}
\usepackage{color}
\usepackage{bm}
\usepackage{threeparttable}
\usepackage{enumitem}
\usepackage{framed}



\newcommand{\qer}[2]{\mathcal{E}_{#1}(#2)}
\newcommand{\fcer}[2]{\mathcal{C}_{#1}^{1}(#2)}
\newcommand{\scer}[2]{\mathcal{C}_{#1}^{2}(#2)}

\newcommand{\recba}{\mbox{\sf RBA}}
\newcommand{\GBA}{\mbox{\sf GBA}}
\newcommand{\OptGBA}{\frac 12 \mbox{\sf-GBA}}
\newcommand{\EpsGBA}{(\frac 12 -\varepsilon)\mbox{\sf-GBA}}

\newcommand{\ignore}[1]{}

\newcommand{\sig}[1]{\langle #1 \rangle}
\newcommand{\Echo}{\mathsf{echo}}
\newcommand{\FirstVote}{\mathsf{vote\text{-}1}}
\newcommand{\SecondVote}{\mathsf{vote\text{-}2}}

\newcommand{\EchoCert}{\mathcal{E}}
\newcommand{\Cert}{\mathcal{C}}

\newcommand{\ThirdVote}{\mathsf{vote\text{-}3}}

\newcommand{\bos}[1]{\medskip\noindent\textbf{#1}}
 
\begin{document}

\title{Optimal Communication Complexity of Authenticated Byzantine Agreement}


\author{Atsuki Momose}
\email{momose@sqlab.jp}
\affiliation{%
  \institution{Nagoya University}
  \country{}
}

\author{Ling Ren}
\email{renling@illinois.edu}
\affiliation{%
  \institution{University of Illinois at Urbana-Champaign}
  \country{}
}

\begin{abstract}
Byzantine Agreement (BA) is one of the most fundamental problems in distributed computing, and its communication complexity is an important efficiency metric.
It is well known that quadratic communication is necessary for BA in the worst case due to a lower bound by Dolev and Reischuk.
This lower bound has been shown to be tight for the unauthenticated setting with $f < n/3$ by Berman et al. but a considerable gap remains for the authenticated setting with $n/3 \le f < n/2$.

This paper provides two results towards closing this gap.
Both protocols have a quadratic communication complexity and have different trade-offs in resilience and assumptions.
The first protocol achieves the optimal resilience of $f < n/2$ but requires a trusted setup for threshold signature. 
The second protocol achieves near optimal resilience $f \le (1/2 - \varepsilon)n$ in the standard PKI model.
\end{abstract}


\maketitle

\section{Introduction}
Byzantine Agreement (BA) is one of the most fundamental problems in distributed algorithms~\cite{lamport1982byzantine}.
It also serves as an important building block in cryptography and distributed systems. 
At a high level, Byzantine agreement is the problem for $n$ parties to agree on a value, despite that up to $f$ of them may behave arbitrarily (called Byzantine faults). 
Arguably the most important efficiency metric of Byzantine Agreement is the communication complexity, since communication will be the bottleneck in applications like state machine replication and cryptocurrency. 

Dolev and Reischuk proved that a quadratic number of messages are necessary for deterministic BA protocols~\cite{dolev1985bounds}.
More formally, they showed that even in the authenticated setting (i.e., with a public key infrastructure and ideal digital signatures), any BA protocol has at least one execution where quadratic number of messages are sent by honest parties.
The tightness of this lower bound was partially established by Berman et al. in the unauthenticated setting with $f < n/3$.
However, for decades, the best known protocol for the authenticated setting (with $f \ge n/3$) remains the classic Dolev-Strong protocol~\cite{dolev1983authenticated}~\footnote{Dolev-Strong solves a related problem called Byzantine broadast, but it is easy to transform it into a BA protocol.}, which uses quadratic messages but cubic communication.
The reason is that in Dolev-Strong, the messages can contain up to $f+1$ signatures.
Therefore, the optimal communication complexity of authenticated BA with $f \ge n/3$ has been an open problem for a very long time.

This paper provides two results that help close this gap. More specifically, we show the following two theorems. 
Note that when $f \ge n/3$, it is necessary to adopt the synchronous and authenticated setting. Under asynchrony~\cite{fischer1985impossibility}, partial synchrony~\cite{dwork1988consensus}, or the unauthenticated setting~\cite{fischer1986easy}, BA is impossible for $f \ge n/3$.

    \begin{theorem} \label{theo:ba_exist_optimal}
        Assuming a threshold signature scheme, there exists a Byzantine agreement protocol with $O(\kappa n^2)$ communication complexity tolerating $f < n/2$ faults where $n$ is the number of parties and $\kappa$ is a security parameter.
    \end{theorem}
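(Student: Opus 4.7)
The plan is to construct an explicit view-based Byzantine agreement protocol in which each view has $O(\kappa n)$ communication and the number of views is $O(n)$, giving $O(\kappa n^{2})$ in total. The centerpiece is the threshold signature scheme assumed in the theorem: any $n/2+1$ partial signatures can be aggregated into a single $O(\kappa)$-bit signature, so a quorum certificate such as $\EchoCert$ or $\Cert$, attesting that more than $n/2$ parties have endorsed something, takes only $O(\kappa)$ bits. This is precisely what defeats the $\Theta(n)$ blow-up of Dolev-Strong, whose messages had to carry up to $f+1$ individual signatures.

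A single view runs a designated leader $L$ through a constant number of rounds: $L$ broadcasts a proposal together with a status certificate justifying its safety; each honest party sends an $\Echo$ partial signature to $L$ if the proposal is compatible with its current lock; $L$ aggregates into $\EchoCert$ and broadcasts it; a few more rounds of $\FirstVote$, $\SecondVote$ (and, if needed, $\ThirdVote$) partial signatures are sent to $L$, aggregated into $\Cert$'s, and rebroadcast; after the appropriate quorum is observed, a party invokes $\Commit$ and outputs. Every step is one-to-all or all-to-one of $O(\kappa)$-sized messages, so a view costs $O(\kappa n)$ bits. Leaders rotate across $f+1$ views, guaranteeing at least one honest-led view; the view change itself transfers locks to the next leader in $O(\kappa n)$ bits, since each party sends its latest $\EchoCert$ and the new leader aggregates these into a fresh status certificate.

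The delicate point, which I expect to be the main obstacle, is safety across views. I would install a locking rule by which an honest party refuses to echo any proposal disagreeing with its freshest observed certificate unless the leader supplies a strictly fresher one. The main safety lemma I would then prove is: if any honest party commits $v$ in view $v^{*}$, then (i) no certificate for $v' \neq v$ can form in view $v^{*}$, because two quorums of size more than $n/2$ for distinct values would force an honest party to double-sign, and (ii) in every subsequent view the leader must propose $v$, because assembling a status certificate for $v' \neq v$ requires $n/2+1$ partial signatures from honest parties that are locked on $v$ and will not comply. Given this lemma, validity (if all honest inputs equal $v$ then no honest party can ever lock on anything else) and liveness (in the first honest-led view all honest parties are already locked on a common safe value, so the leader's proposal completes) fall out cleanly, and multiplying $O(n)$ views by $O(\kappa n)$ per view yields the claimed communication bound.
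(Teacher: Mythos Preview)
Your proposal has a genuine gap in the safety argument, and it sits exactly where the theorem is interesting. You claim that in any view $v^{*}$, ``no certificate for $v' \neq v$ can form \ldots because two quorums of size more than $n/2$ for distinct values would force an honest party to double-sign.'' But two quorums of size $n-f$ intersect in only $n-2f$ parties, and for $n/3 \le f < n/2$ we have $n-2f \le f$: the entire intersection can be Byzantine. Concretely, a Byzantine leader can send proposal $v$ to one set of $n-2f$ honest parties and proposal $v'$ to a disjoint set of $n-2f$ honest parties (this is possible since $2(n-2f) \le n-f$ exactly when $f \ge n/3$), have the $f$ Byzantine parties sign both, and obtain two conflicting $(n-f)$-certificates in the same view. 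With only one-to-all and all-to-one communication the honest parties cannot detect this equivocation, so your per-view $O(\kappa n)$ budget is incompatible with safety in the regime $f \ge n/3$. Your construction is essentially the synchronous HotStuff line, which the paper itself notes gives $O(\kappa n^{2})$ only for $f<n/3$; the whole point of the theorem is to go beyond that.

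The paper takes a completely different route. It does not iterate views; it recurses \`a la Berman--Garay--Perry, splitting the $n$ parties into two halves and solving BA in each half. The glue between recursive calls is a Graded Byzantine Agreement primitive, and the paper builds a $\OptGBA$ with $O(\kappa n^{2})$ communication by using explicit all-to-all equivocation detection (an echo round followed by forwarding the threshold-signed echo-certificate) rather than quorum intersection. Because the outer recurrence is $C(n)=2C(n/2)+O(\kappa n^{2})$, the total remains $O(\kappa n^{2})$. If you want to repair your approach you would need all-to-all forwarding inside each view to catch leader equivocation, but that makes a view cost $O(\kappa n^{2})$ and $f+1$ views then cost $O(\kappa n^{3})$; the recursive framework is precisely what avoids this blow-up.
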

    
    \begin{theorem} \label{theo:ba_without_trusted_setup}
        Assuming a public-key infrastructure, there exists a Byzantine agreement protocol with $O(\kappa n^2)$ communication complexity tolerating $f \le (\frac 12 - \varepsilon)n$ faults where $n$ is the number of parties, $\kappa$ is a security parameter, and $\varepsilon$ is any positive constant. 
    \end{theorem}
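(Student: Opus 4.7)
The plan is to run the protocol of Theorem~\ref{theo:ba_exist_optimal} on a small random subcommittee $C \subseteq [n]$ of size $c = \Theta(\kappa/\varepsilon^{2})$, trading a small slack in resilience for the ability to simulate threshold signatures by concatenations of at most $c$ plain PKI signatures. The crucial observation is that with $f \le (\tfrac{1}{2}-\varepsilon) n$ Byzantine parties globally, a Chernoff--Hoeffding argument shows that a uniformly random size-$c$ subset contains fewer than $c/2$ Byzantine members except with probability $2^{-\Omega(\kappa)}$. Hence the honest-majority precondition of Theorem~\ref{theo:ba_exist_optimal} is satisfied inside $C$, and its protocol runs internally while only inflating each certificate from $O(\kappa)$ to $O(\kappa c)$ bits.

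Concretely, the protocol has four phases. First, a committee-election sub-protocol produces a common $C$ whose distribution is close to uniform from the adversary's view. Second, every party multicasts its input to each member of $C$, at cost $O(\kappa n c)$. Third, the committee runs the inner honest-majority BA of Theorem~\ref{theo:ba_exist_optimal} with each threshold certificate replaced by a list of up to $c$ individual signatures, yielding $O(\kappa c^{3})$ bits in total. Fourth, each committee member multicasts the decided value together with a $(\lceil c/2\rceil + 1)$-signature certificate to every party in $[n]$, and non-committee parties adopt the value upon receiving such a certificate; this phase costs $O(\kappa c^{2} n)$. Summing the three non-trivial terms gives $O(\kappa(n c + c^{3} + c^{2} n))$, which is $O(\kappa n^{2})$ whenever $c = O(\sqrt{n})$, a condition comfortably met by $c = \Theta(\kappa/\varepsilon^{2})$ in any regime where the asymptotic statement is meaningful.

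The main obstacle is realising the committee-election phase under pure PKI, since without threshold cryptography we have no compact common coin. I would implement it by a commit-then-reveal sub-protocol: every party broadcasts a signed commitment to a nonce, then opens it, and the XOR of the opened values is fed through a public hash to derive the committee indices. Making this unbiasable is delicate because a rushing adversary can choose whether to release its own nonce; the $\varepsilon$-slack is used here, via a technical argument showing that even if every Byzantine party aborts strategically, the sampled committee still has honest majority except with probability $2^{-\Omega(\kappa)}$. This sub-protocol itself fits in $O(\kappa n^{2})$ communication via a single all-to-all broadcast of commitments and openings. The remaining ingredients---safety and liveness of the inner protocol, honest-majority in the sampled committee, and unforgeability of committee certificates---are standard once Theorem~\ref{theo:ba_exist_optimal} is in hand, so the proof effectively reduces to analysing a secure committee election within the $O(\kappa n^{2})$ budget.
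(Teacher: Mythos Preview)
Your proposal takes a completely different route from the paper and, as stated, does not go through.

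The paper's proof is deterministic: it plugs into the same recursive Berman--et--al.\ framework used for Theorem~\ref{theo:ba_exist_optimal} a new Graded BA protocol that avoids threshold signatures by disseminating $(n-f)$-sized certificates only along the edges of a fixed constant-degree expander $G_{n,\varepsilon}$ rather than by all-to-all multicast. The expansion guarantee that any $2\varepsilon n$ vertices have more than $(1-2\varepsilon)n$ neighbors is precisely what turns ``$n-f$ parties signed'' into ``at least $f+1$ honest parties received'', and this is where the $\varepsilon$-slack is actually spent. No randomness and no committee are involved.

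Your committee-sampling idea has two concrete gaps. First, the model fixed in Section~\ref{sec:preliminaries} allows \emph{adaptive} corruption at any point during the execution; once a public committee $C$ of size $c=\Theta(\kappa/\varepsilon^{2})$ is determined, the adversary simply corrupts all of $C$ (it has $f=\Theta(n)\gg c$ corruptions to spend), and both safety and liveness of the inner BA collapse. Second, even against a static adversary your commit--then--reveal coin is broken in this parameter regime. After honest openings are visible, each of the $f$ Byzantine parties can independently decide to open or withhold its committed nonce, giving the adversary $2^{f}=2^{\Theta(n)}$ candidate seeds; a union bound of $2^{\Theta(n)}$ tries against a per-seed bad-committee probability of only $2^{-\Theta(\kappa)}$ is vacuous unless $\kappa=\Omega(n)$, at which point $c=\Theta(\kappa/\varepsilon^{2})$ is no longer $O(\sqrt{n})$ and your communication accounting fails. (There is also a circularity: getting all honest parties to agree on \emph{which} commitments were posted already requires agreement.) The ``technical argument'' you allude to for absorbing selective aborts into the $\varepsilon$-slack does not exist for XOR-based coins with $\Theta(n)$ adversarial aborts; the $\varepsilon$ buys you resilience against sampling noise, not against $2^{\Theta(n)}$ adversarial resamples.
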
    

\begin{table*}[tb]
  \centering
  \begin{threeparttable}
  \begin{tabular}{|c|c|c|c|} \hline
    \textsf{protocol} & \textsf{model} & \textsf{communication} & \textsf{resilience} \\ \hline \hline
    Berman et al.~\cite{berman1992bit} & unauthenticated & $O(n^2)$ & $f < n/3$ \\
    Dolev-Strong~\cite{dolev1983authenticated} & PKI & $O(\kappa n^2 + n^3)$ \tnote{a} & $f < n/2$ \tnote{b} \\
     \hline
    \textbf{this paper} & threshold signature & $O(\kappa n^2)$ & $f < n/2$ \\
    \textbf{this paper} & PKI & $O(\kappa n^2)$ & $f \le (\frac 12 - \varepsilon)n$ \\ \hline
  \end{tabular} 
  \begin{tablenotes}
    \small
    \item[a] The original Dolev-Strong protocol solves BB but can be easily converted into a BA protocol with an initial round to multicast the inputs. Using a multi-signature with a list of signer identities attached, the protocol achieves $O(\kappa n^2 + n^3)$. 
    
    \item[b] Although the original Dolev-Strong BB protocol tolerates $f < n$ faults, converting it to a BA protocol decreases the fault tolerance to $f < n/2$, which is optimal for authenticated BA.        
    

  \end{tablenotes}
  \medskip
  \end{threeparttable}
  \caption{Upper bounds for worst-case communication complexity of deterministic Byzantine agreement. $\varepsilon$ is any positive constant.}
\end{table*}
\label{tab:comparison_with_prior_work}

As we can see, the above two results achieve quadratic communication with different trade-offs. The first result achieves the optimal resilience $f < n/2$ but relies on a trusted setup due to the use of threshold signature. On the other hand, the second result is in the standard PKI model, but there is a small gap in the resilience. 
Table \ref{tab:comparison_with_prior_work} compares our results to the current landscape of worst-case communication complexity of BA.

\bos{Comparing with state-of-the-art BA solutions.}
Although our primary motivation of this study is to close the gap on the worst-case communication complexity of deterministic BA, the second result in Theorem~\ref{theo:ba_without_trusted_setup} has some advantage even over state-of-the-art randomized protocols.
To the best of our knowledge, our second protocol is the first to achieve the following three properties simultaneously under the standard PKI model: (1) near-optimal resilience of $f \le (\frac 12 - \varepsilon)n$, (2) security against an 
adaptive adversary, (3) expected sub-cubic (in fact, we achieve worst-case quadratic) communication complexity.
The works of Berman et al.~\cite{berman1992bit} and King-Saia~\cite{king2011breaking} achieve (sub-)quadratic communication and adaptive security but tolerate only $f < n/3$.
Abraham et al.~\cite{abraham2019synchronous,abraham2019communication} achieve (sub-)quadratic communication and adaptive security under $f \le (\frac 12 - \varepsilon)n$, but require some trusted setup assumption due to the use of threshold signature or verifiale random functions.
Tsimos et al.~\cite{tsimos2020nearly} recently achieve nearly-quadratic communication in the standard PKI model for $f \le (1 - \varepsilon)n$ (for broadcast), but it is secure only against a static adversary.

\bos{Organization.}
The rest of the paper is organized as follows.
In the rest of this section, we briefly review related work and give an overview of the techniques we use to achieve our two results. 
Section \ref{sec:preliminaries} introduces definitions, models and notations.
Section \ref{sec:recursive_ba} introduces the recursive framework to get a BA protocol with quadratic communication including the definition of GBA primitive.
Section \ref{sec:gbas} presents two GBA protocols to instantiate two BA protocols with different trade-offs to complete our results.
Finally, we discuss future directions and conclude the paper in Section \ref{sec:conclusion}.

\subsection{Technical Overview}

\bos{Abstracting the recursive framework of Berman et al.}
To obtain the results, we revisit the Berman et al.~\cite{berman1992bit} protocol. At a high level, Berman et al. is a recursive protocol: it partitions parties into two halves recursively until they reach a small instance with sufficiently few (e.g., a constant number of) participants. Since the upper bound on the fraction of faults $1/3$ is preserved in at least one of two halves, the ``correct'' half directs the entire parties to reach an agreement. If the communication except the two recursive calls is quadratic, the communication complexity of the entire protocol is also quadratic. The challenge is to prevent an ``incorrect'' run of recursive call (in a half with more than $1/3$ faults) from ruining the result. Berman et al. solve this problem with a few additional rounds of communication called ``universal exchange'' before each recursive call. It helps honest parties stick to a value when all honest parties already agree on the value, thus preventing an incorrect recursive call from changing the agreed-upon value.

Back to our setting of $f \ge n/3$, we will use the recursive framework of Berman et al.. 
However, the universal exchange step of Berman et al. relies on a quorum-intersection argument, which only works under $f < n/3$. 
To achieve BA with $f \ge n/3$, we observe that the functionality achieved by the universal exchange can be abstracted as a new primitive called graded Byzantine agreement (GBA), which we formally define in Section \ref{sec:recursive_ba}.
If we can construct a GBA with quadratic communication and plug it into the recursive framework, we will obtain a BA protocol with quadratic communication.
Thus, it remains to construct quadratic GBA.

\bos{Two constructions of Graded BA with different trade-offs.}
As the name suggests, the GBA primitive shares some similarities with graded broadcast studied in~\cite{katz2009expected,abraham2019synchronous}, but it is harder to construct due to the fact that every party has an input. This can be addressed in two ways, leading to our two constructions. 

The first method way is to resort to the (well-established) use of threshold signatures~\cite{cachin2001secure,yin2019hotstuff}. Roughly, a threshold signature condenses a quorum of $n-f = \Omega(n)$ votes into a succinct proof of the voting result.
This way, a verifiable voting result can be multicasted to all parties using quadratic total communication (linear per node).
This achieves Theorem \ref{theo:ba_exist_optimal} and requires a trusted setup for threshold signature.

Next, we try to construct a quadratic GBA without trusted setup or threshold signature scheme. 
This turns out to be much more challenging.
Na\"ively multicasting the voting result would require quadratic communication per node (cubic in total) since the voting result consists of a linear number of votes.
To get around this problem, we replace the multicast step with communication through an expander graph with constant degree.
As each party transmits the voting result to only a constant number of neighbors, the communication is kept quadratic in total even though the voting result consists of a linear number of votes.
Our key observation is that even though some of the honest parties may fail to receive or transmit the voting result (because all their neighbors are corrupted), as long as a small but linear number of honest parties transmit the voting result, the good connectivity of the expander helps prevent inconsistent decisions between honest parties.
In order to verify a linear number of honest parties actually transmit, a quorum of $n-f$ parties who claim to have transmitted should contain at least a linear number of honest parties, which results in the gap of $\epsilon n$ in the resilience in Theorem \ref{theo:ba_without_trusted_setup}.

\subsection{Related Work}

Byzantine Agreement was first introduced by Lamport et al.~\cite{pease1980reaching,lamport1982byzantine}. Without cryptography (i.e., the unauthenticated setting), BA can be solved if and only if $f < n/3$. Assuming a digital signature scheme with a public-key infrastructure (i.e., the authenticated setting), BA can be solved if and only if $f < n/2$.
Lamport et al. gave BA protocols for both settings, but they both require exponential communication. 
Later, polynomial communication protocols were shown in both settings. In particular, Dolev and Strong~\cite{dolev1983authenticated} showed a $O(\kappa n^3)$ communication protocol for the authenticated setting and Dolev et al.~\cite{dolev1982efficient} showed a $O(n^3 \log n)$ communication protocol for the unauthenticated setting. For the unauthenticated setting, Berman et al. further reduced the communication to $O(n^2)$,
matching a lower bound established by Dolev and Reischuk~\cite{dolev1985bounds}, which states that any deterministic protocol must incur $\Omega(n^2)$ communication complexity.
A recent work called HotStuff~\cite{yin2019hotstuff} can be modified~\cite{spiegelman2020search} to achieve $O(\kappa n^2)$ communication with $f < n/3$ for the authenticated setting. 

We also mention two orthogonal lines of work.
Some works known as extension protocols~\cite{cachin2005asynchronous,miller2016honey,nayak2020improved,lu2020dumbo} achieve an optimal $O(nl)$ communication complexity for sufficiently long inputs of size $l$ using the BA oracle for short inputs.
When the input size is small, e.g., $l = O(1)$, the communication complexity degenerates to that of the underlying BA oracle.
Our work provides improved oracles for these protocols.
Another line of works study randomized protocols to get expected quadratic~\cite{feldman1988optimal,katz2009expected,cachin2001secure,micali2016byzantine,abraham2018validated,abraham2019synchronous} or even sub-quadratic communication~\cite{king2011breaking,chen2016algorand,abraham2019communication}.
Naturally, they do not address the tightness of the Dolev-Resischuk lower bound for deterministic protocols. 
In contrast, even though our protocol will also have an error probability from signature schemes, our protocol is otherwise deterministic and error-free, and hence, is subject to the Dolev-Resischuk lower bound.

\section{Preliminaries} \label{sec:preliminaries}

\bos{Execution model.}
We define a protocol as an algorithm for a set of parties. There are a set of $n$ parties, of which at most $f < n$ are Byzantine faulty and behave arbitrarily. We assume $f = \Theta(n)$.  
All presented protocols are secure against $f$ adaptive corruption that can happen anytime during the protocol execution.
A party that is not faulty throughout the execution is said to be honest and faithfully execute the protocol. We use the term \emph{quorum} to mean the minimum number of all honest parties, i.e., $n-f$. 
A protocol proceeds in synchronous rounds. If an honest party sends a message at the beginning of some round, an honest recipient receives the message at the end of that round. 

We assume digital signatures and public-key infrastructure (PKI), and use $\langle x \rangle_{r}$ to denote a message $x$ signed by party $r$. 
As mentioned, one of our protocols assumes a threshold signature scheme~\cite{cachin2001secure,libert2016born}. In the threshold signature scheme, a set of signatures $\langle x \rangle_{r}$ for a message $x$ from $t$ (the threshold) distinct parties can be combined into a threshold signature for $x$ with the same length as an individual signature.
The currently known threshold signature schemes require a trusted dealer who generates all public and private keys for all parties and a group public key to verify an aggregated full signature, henceforth we call it trusted setup.
Our second protocol is in the standard PKI model and does not require any trusted setup beyond that.
In that case, each party independently generates a pair of public and private keys without any extra assumption.
As commonly done in Byzantine agreement, we abstract away the details of cryptography, namely, we assume the (threshold) signature schemes are ideal.

\bos{Complexity metrics.}
The communication complexity of a protocol is the maximum number of bits sent by all honest parties combined across all executions. 
Since all messages in our protocols are signed, we use the signature size $\kappa$ as the unit of measure for communication.
We assume the size of any input value is on the order of $\kappa$. 
The Dolev-Reichuk lower bound, however, is in terms of the number of messages. 
With no assumption on the message size, this leaves a gap of $\kappa$ in a pair of matching upper and lower bounds. 
If we further assume that every message in authenticated protocols is signed, then the bounds match.
It is an interesting open problem whether we can design an authenticated protocol that leaves most of the messages \emph{unsigned} to do better than $O(\kappa n^2)$.

\bos{Byzantine Agreement.}
In Byzantine Agreement (BA), each party has an input value, and all parties try to decide on the same value. 
The requirement of BA is defined as follows.

\begin{definition}[Byzantine Agreement (BA)]
    A Byzantine agreement protocol must satisfy the following properties. 
    \begin{enumerate}
        \item consistency. if two honest parties $r$ and $r'$ decide values $v$ and $v'$, then $v = v'$.
        \item termination. every honest party decides a value and terminates.
        \item validity. if all honest parties have the same input value, then all honest parties decide that value.
    \end{enumerate}
\end{definition}

Although our main focus of this paper is BA, we also mention a closely related problem called Byzantine broadcast (BB).
In BB, a designated sender has an input to broadcast to all parties, and all parties try to decide on the same value.
The requirement of BB is defined as follows.

\begin{definition}[Byzantine Broadcast (BB)]
    A Byzantine broadcast protocol must satisfy the following properties. 
    \begin{enumerate}
        \item consistency. same as above.
        \item termination. same as above.    
        \item validity. if the sender is honest, then all honest parties decide the sender's value.
    \end{enumerate}    
\end{definition}

It is easy to transform a BA protocol into a BB protocol preserving the same resilience and quadratic communication complexity by having an initial round for the sender to broadcast its input value before starting the BA protocol~\cite{lamport1982byzantine}.
As the Dolev-Reischuk lower bound holds for both BA and BB, our results establish the tightness of the quadratic communication complexity for BB as well (though the resilience $f < n/2$ is not optimal for BB, which is possible under any $f<n$).

\section{Recursive Framework of Byzantine Agreement with Quadratic Communication} \label{sec:recursive_ba}

This section reviews the recursive framework to construct a BA protocol with quadratic communication introduced by Berman et al. \cite{berman1992bit} for $f < n/3$, and making it works for $f < n/2$.

\bos{Dissecting Berman et al.}
In the Berman et al. protocol, parties are partitioned into two halves, and each half runs the BA protocol recursively in sequential order. 
The partition continues until we reach a BA instance with a constant number of parties, where using any inefficient BA protocol will not impact the overall complexity. 
At each recursive step, additional quadratic communication is incurred besides the two recursive BA calls.  
It is not hard to see that the overall communication complexity is quadratic.

Since the fraction of faults in the entire parties is less than $1/3$, one of two halves also has faults of less than $1/3$ and thus achieve a ``correct'' BA. However, even if the first committee is correct, the potential incorrect second BA instance may ``ruin'' the result of the first one. 
To prevent this, parties run a few rounds of preprocessing steps called ``universal exchange'' in Berman et al. before each recursive BA call. 
The universal exchange step helps parties ``stick to'' a value (ignoring the recursive BA output) if all honest parties already agree on that value.
In more detail, if the first run of recursive BA is correct and all honest parties agree on a value, the universal exchange before the second run makes sure all honest parties  stick to it and the second run cannot change the agreed-upon value. 

A tricky situation this universal exchange step needs to handle is when some honest parties stick to a value but other parties do not.
In this case, this step needs to ensure that, if any honest party sticks to a value, other parties at least input that value to the subsequent BA call.
The validity property of a correct recursive BA call will ensure agreement.

Here, the above recursive construction itself is independent of $f$, but the universal exchange step of Berman et al. relies on a quorum-intersection argument which only works under $f < n/3$.
To make the framework independent of $f$, we abstract the functionality of this step as \emph{graded Byzantine agreement} (GBA), since it is essentially the agreement version of graded broacast~\cite{feldman1988optimal,katz2009expected}.
In the rest of this section, we formally define the GBA primitive and construct a BA protocol using a GBA protocol as a black-box and prove its correctness.

\subsection{Graded Byzantine Agreement}

We introduce a primitive we call graded Byzantine agreement (GBA). In GBA, each party $r$ has an input value, and outputs a tuple $(v, g)$ where $v$ is the output value and $g \in \{0, 1\}$ is a grade bit.

\begin{definition}[Graded Byzantine Agreement (GBA)]
    A Graded Byzantine agreement protocol must satisfy the following properties. 
    \begin{enumerate}
        \item consistency. if an honest party outputs $(v, 1)$, then all honest parties output $(v, *)$.
        \item validity. if all honest parties have the same input value $v$, then all honest parties output $(v, 1)$
        \item termination. every honest party outputs and terminates.
    \end{enumerate}
\end{definition}

The ``stick to'' nature is expressed by the grade bit $g$. The consistency property requires that if an honest party sticks to a value $v$, i.e., output $v$ with $g = 1$, then all honest parties output the same value $v$. The validity property states that if all honest parties have the same input value $v$, they all stick to the value. These two properties capture what the universal exchange step needs to achieve explained at an intuitive level.

\subsection{Recursive Construction of Byzantine Agreement}

\begin{figure*}[t!]
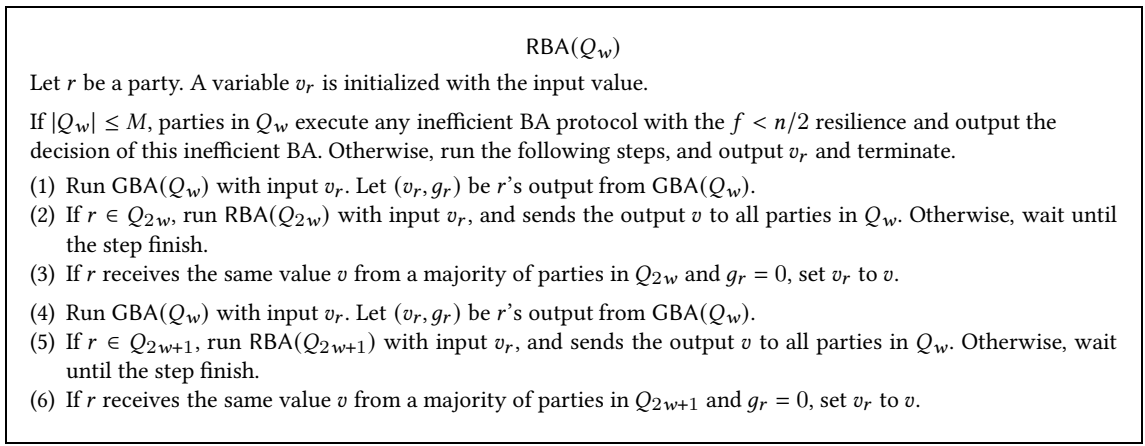

    \begin{framed}
    \begin{center} $\recba(Q_{w})$ \end{center}
    \begin{flushleft}
    Let $r$ be a party. A variable $v_{r}$ is initialized with the input value.
    
    \medskip
    If $|Q_{w}| \le M$, parties in $Q_{w}$ execute any inefficient BA protocol with the $f < n/2$ resilience
    and output the decision of this inefficient BA. Otherwise, run the following steps, and output $v_{r}$ and terminate.
    \end{flushleft}
    
    \begin{enumerate}[leftmargin=*,itemsep=0pt]
        \item Run $\GBA(Q_{w})$ with input $v_{r}$. Let $(v_{r}, g_{r})$ be $r$'s output from $\GBA(Q_{w})$.
        
        \item If $r \in Q_{2w}$, run $\recba(Q_{2w})$ with input $v_{r}$, and sends the output $v$ to all parties in $Q_{w}$. Otherwise, wait until the step finish.
        
        \item If $r$ receives the same value $v$ from a majority of parties in $Q_{2w}$ and $g_{r} = 0$, set $v_{r}$ to $v$. 

        \medskip
        
        \item Run $\GBA(Q_{w})$ with input $v_{r}$. Let $(v_{r}, g_{r})$ be $r$'s output from $\GBA(Q_{w})$.
        
        \item If $r \in Q_{2w+1}$, run $\recba(Q_{2w+1})$ with input $v_{r}$, and sends the output $v$ to all parties in $Q_{w}$. Otherwise, wait until the step finish.
        
        \item If $r$ receives the same value $v$ from a majority of parties in $Q_{2w+1}$ and $g_{r} = 0$, set $v_{r}$ to $v$. 
    \end{enumerate}
    \end{framed}
    \caption{Byzantine Agreement with $O(\kappa n^2)$ communication and $f < \frac{n}{2}$.}
    \label{fig:ba_rec}
\end{figure*}

Next, we present the recursive BA protocol $\recba$ in Figure \ref{fig:ba_rec}. Let $Q_{w}$ denote a set of parties that run a BA protocol. Since the protocol is recursive, the set $Q_{w}$ is also defined recursively. $Q_{1}$ is a set of all $n$ parties. $Q_{2w}$ is the first $\lceil |Q_{w}|/2 \rceil$ parties in $Q_{w}$, and $Q_{2w+1}$ is the remaining $\lfloor |Q_{w}|/2 \rfloor$ parties.
All parties start by running $\recba(Q_1)$ at the beginning.

If the size of the $\recba$ instance gets below a constant, denoted as $M$ in the figure, parties can run any inefficient BA protocol with cubic or even higher communication complexity but with the desired resilience up to $f < n/2$.
There are many such constructions in the literature~\cite{lamport1982byzantine,dolev1983authenticated,katz2009expected,abraham2019synchronous}; we do not describe these protocols. 
Otherwise, parties run two instances of $\recba$ recursively to further reduce the instance size.
Before each recursive call, they run a given GBA protocol denoted $\GBA$.
The grade bit output $g_r$ of the $\GBA$ determines if a party $r$ ``sticks to'' the $\GBA$ output or adopts the recursive $\recba$ output. 

\bos{Correctness of the Protocol.}
We prove the correctness of $\recba$ for $f < n/2$ assuming the given GBA protocol $\GBA$ also tolerates $f < n/2$. The proof is easily extended for $f \le (\frac 12 - \varepsilon)n$.
Below, minority faults within a set of parties $Q$ mean at most $\lfloor (|Q|-1)/2 \rfloor$ faults.
    
\begin{lemma}
    $\recba$ solves BA in the presence of minority faults.
\end{lemma}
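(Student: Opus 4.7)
The plan is to prove consistency, validity, and termination of $\recba$ by strong induction on $|Q_w|$. The base case, when $|Q_w| \le M$, is immediate from the guarantees of the inefficient BA subprotocol, which is chosen to tolerate minority faults. Termination in the inductive step is also easy: both recursive invocations terminate by induction, and both $\GBA$ calls terminate by definition, so the whole protocol does.

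Validity in the inductive step is the simpler direction. If every honest party in $Q_w$ starts with the same input $v$, then step 1 is $\GBA$ with unanimous honest input, so by $\GBA$ validity every honest party gets $(v,1)$. Since the update rule in step 3 only fires when $g_r=0$, every honest party enters step 4 with $v_r=v$; $\GBA$ validity fires again, step 6 does nothing, and every honest party outputs $v$.

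The heart of the argument is consistency. Because $Q_w$ has minority faults, at least one of $Q_{2w}$, $Q_{2w+1}$ has minority faults, and by the inductive hypothesis the recursive call on that half is a correct BA. I would split on which half that is. Suppose first that $Q_{2w}$ has minority faults, and further split on whether any honest party outputs $g_r=1$ in step 1: if yes, $\GBA$ consistency forces every honest party (including those in $Q_{2w}$) to hold a common value $v'$ entering step 2; inductive \emph{validity} of $\recba(Q_{2w})$ then makes the recursive output equal to $v'$, so in step 3 the parties with $g_r=1$ keep $v_r=v'$ and the parties with $g_r=0$ update to $v'$ from an honest majority of $Q_{2w}$. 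If no honest party has $g_r=1$, all honest parties hold $g_r=0$, and inductive \emph{consistency} of $\recba(Q_{2w})$ gives a common value $v^*$ that they all adopt in step 3. Either way, all honest parties in $Q_w$ agree after step 3, and the validity argument applied to steps 4--6 preserves the agreement through termination. The case where only $Q_{2w+1}$ has minority faults is symmetric: we discard the (possibly incoherent) state after step 3, reapply the same dichotomy to the step-4 $\GBA$, and use inductive validity and consistency of $\recba(Q_{2w+1})$ exactly as before.

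The main obstacle is the interaction between the grade bit and the recursive majority rule: one must rule out the scenario in which a ``stuck'' honest party ($g_r=1$) and a ``non-stuck'' honest party end up on different values because the recursive call's output disagrees with the value pinned down by $\GBA$ consistency. The resolution is the point flagged above, that $\GBA$ consistency propagates the pinned value to every honest party entering the coming recursive half, and inductive validity of the recursive call then forces its output to match; this is the one place where the proof genuinely relies on validity of the recursive BA rather than just consistency, and it is what makes the step-3 and step-6 ``adopt majority only if $g_r=0$'' rule safe.
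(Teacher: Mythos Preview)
Your proposal is correct and follows essentially the same argument as the paper: induction on $|Q_w|$, a case split on which half inherits minority faults, and within that a sub-split on whether some honest party has $g_r=1$ after the $\GBA$ call, invoking $\GBA$ consistency plus inductive \emph{validity} in the former sub-case and inductive \emph{consistency} in the latter, then closing with $\GBA$ validity through the remaining steps. Your explicit identification of why inductive validity (not just consistency) is needed when some $g_r=1$ matches the paper's reasoning exactly.
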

\begin{proof}
    Termination is obvious. 
    The proof for validity is also easy. If all honest parties have the same input value $v_{r} = v$, then due to the validity of $\GBA$, all honest parties output $(v,1)$ in step-1.
    Thus, they do not change $v_{r}$ at step-3 and input $v_r=v$ into the $\GBA$ of step-4.
    Again due to the validity of $\GBA$, all honest parties output $(v,1)$ in step-4, do not change $v_{r}$ at step-6, and all output $v$.
    
    Next, we prove consistency.
    When $|Q| \le M$, the correctness of $\recba$ reduces to the correctness of the given inefficient BA. We just need to prove for the recursive step. Specifically, we will prove that $\recba$ solves BA under $n$ parties with minority faults, if $\recba$ solves BA under $<n$ parties with minority faults.
    
    Consider $\recba(Q_w)$.
    Since $Q_{w}$ has minority faults, at least one of the two halves $Q_{2w}$ and $Q_{2w+1}$ has minority faults. 
    Let us first consider the case where $Q_{2w}$ has minority faults. Here, there are two situations with regard to the result of step-1: (i) all honest parties in $Q_{w}$ set $g_{r}$ to 0, or (ii) at least an honest party in $Q_{w}$ sets $g_{r}$ to 1. 
    
    In the first situation, all honest parties will set $v_r$ to the majority output of step-2. By the consistency and termination of $\recba(Q_{2w})$, all honest parties in $Q_{w}$ receive the same value $v$ from honest parties in $Q_{2w}$ (which constitute a majority in $Q_{2w}$ ). Thus, all honest parties in $Q_{w}$ set $v_{r}$ to $v$ in step-3. 
    
    In the second situation, since some honest party sets $g_{r}$ to 1 in step-1, then by the consistency of $\GBA$, all honest parties in $Q_{w}$ set $v_{r}$ to the same value $v$ at the end of step-1. By the validity of $\recba(Q_{2w})$, all honest parties in $Q_{2w}$ output $v$, so all honest parties in $Q_w$ receive $v$ from a majority of parties in $Q_{2w}$.
    Thus, all honest parties in $Q_{w}$ set $v_{r}$ to $v$ in step-3.
    
    Therefore, in both situations, all honest parties in $Q_{w}$ have the same value $v_{r} = v$ at the beginning of step-4. Then, by the validity of $\GBA$, all honest parties in $Q_{w}$ set $(v_{r}, g_{r})$ to $(v, 1)$ in step-4, so will not change their $v_{r}$ in step-6 and all output the same value $v$. 
    
    The other case where $Q_{2w+1}$ has minority faults can be proved similarly.
    No matter which of the two situations holds at step-4 (all have $g_r=0$ or some have $g_r=1$),  
    all honest parities in $Q_{w}$ have the same value $v_{r} = v$ at the end of step-6 and output the same value $v$. 
    Therefore, regardless of whether $Q_{2w}$ or $Q_{2w+1}$ has minority faults, consistency holds. 
\end{proof}

With some foresight, we will construct $\GBA$ with quadratic communication in the later section.
This will give $\recba$ with quadratic communication in total.

\begin{lemma}[Communication Complexity]
    If the communication complexity of $\GBA$ is $O(\kappa n^2)$, then the communication complexity of $\recba$ is $O(\kappa n^2)$.
\end{lemma}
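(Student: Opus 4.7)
The plan is to set up a recurrence for the communication complexity and solve it. Let $T(n)$ denote the communication cost of $\recba$ when invoked on a set of $n$ parties. I will show that $T(n) \le 2\,T(n/2) + c\kappa n^2$ for some constant $c$, which unrolls to $O(\kappa n^2)$.

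First, I would enumerate the non-recursive cost incurred at one level of the recursion on $|Q_w| = n$ parties. Steps 1 and 4 each invoke $\GBA$ on $Q_w$, contributing $O(\kappa n^2)$ by hypothesis. Steps 2 and 5 each include a multicast from the $\lceil n/2 \rceil$ or $\lfloor n/2 \rfloor$ parties of the sub-committee to the whole of $Q_w$; since the value sent has size $O(\kappa)$ and each sender transmits to $n$ recipients, each such multicast contributes $O(\kappa \cdot (n/2) \cdot n) = O(\kappa n^2)$. Steps 3 and 6 are purely local and incur no communication. The two recursive calls contribute $T(\lceil n/2 \rceil) + T(\lfloor n/2 \rfloor) \le 2\,T(n/2)$ (absorbing ceiling/floor into constants). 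Summing yields the recurrence $T(n) \le 2\,T(n/2) + c\kappa n^2$.

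Next I would solve the recurrence. By the Master theorem with $a = b = 2$ and $f(n) = c\kappa n^2$, we have $n^{\log_b a} = n$ and $f(n) = \Omega(n^{1+\varepsilon})$, so $T(n) = \Theta(f(n)) = O(\kappa n^2)$. Equivalently, by direct unrolling, $T(n) \le c\kappa n^2 \sum_{i=0}^{\log_2(n/M)} (1/2)^i + 2^{\log_2(n/M)} T(M) \le 2c\kappa n^2 + O(n)$, where the base case $T(M) = O(1)$ comes from the fixed-size inefficient BA invoked when $|Q_w| \le M$. Thus $T(n) = O(\kappa n^2)$ as claimed.

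I do not expect any real obstacle here: the bookkeeping is routine once one observes that each level of the recursion (over all sub-problems at that level combined) contributes at most $O(\kappa n^2)$ communication. The one subtlety worth being explicit about is that the multicast in steps 2 and 5 is from the sub-committee to the entire $Q_w$ (not just within the sub-committee), which is why that step alone is $\Theta(\kappa n^2)$ rather than $\Theta(\kappa (n/2)^2)$; verifying that this still yields a geometric sum dominated by the top level is the crux of the calculation.
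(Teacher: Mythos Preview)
Your proposal is correct and follows essentially the same approach as the paper: set up the recurrence $C(s) = C(\lfloor s/2 \rfloor) + C(\lceil s/2 \rceil) + O(\kappa s^2)$ with constant base case, then unroll it into a geometric sum dominated by the top level. Your extra itemization of the non-recursive costs (the two $\GBA$ calls and the two sub-committee-to-$Q_w$ multicasts) and the Master-theorem alternative are helpful elaborations, but the underlying argument is the same as the paper's.
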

\begin{proof}
    The communication complexity of $\recba$ is given as a recurrence below. Let $s$ be the number of parties in an RBA instance.
    \begin{equation*}
    C(s) = \begin{cases}
                O(\kappa) ~~~~ (\mbox{if } s \le M) \nonumber \\
                C(\lfloor s/2 \rfloor) + C(\lceil s/2 \rceil) + O(\kappa s^2) ~~~~ (\mbox{otherwise})
    \end{cases}
    \end{equation*}
    For any $n$, the depth of the recursion $k$ satisfies $2^{k-1}M \le n \le 2^k M$. Hence, $C(n) \le 2^k O(\kappa)$ + $\sum_{i=0}^{k} 2^i O(\kappa (n/2^i)^2) = O(\kappa n^2)$.
\end{proof}




\section{Graded Byzantine Agreement with Different Trade-offs}
\label{sec:gbas}

This section presents two constructions of GBA protocols with different trade-offs to instantiate two BA protocols from the recursive framework in the previous section and complete the proof of Theorem \ref{theo:ba_exist_optimal} and \ref{theo:ba_without_trusted_setup}.

\subsection{Graded Byzantine Agreement with Threshold Signature Scheme}
We first present a GBA protocol (denoted $\OptGBA$) with quadratic communication and $f < n/2$ assuming a threshold signature scheme, which complete the proof of Theorem \ref{theo:ba_exist_optimal}.
We describe $\OptGBA$ in Figure~\ref{fig:graded_ba}. The parameter $Q$ is a set of parties that participate in the protocol. Let $n = |Q|$.

\bos{Intuitive overviews.}
The construction is inspired by a few recent work on synchronous BB and BFT protocols~\cite{abraham2019synchronous,synchotstuff,optimalsmr2020}. Rounds 1--3 form a set of $n-f$ $\FirstVote$ (vote1-certificate) for the same value $v$, denoted $\Cert^{1}(v)$. Here, if an honest party votes for a value $v$ in round 3, it must have received and multicast $n-f$ $\Echo$ (echo-certificate) for $v$, denoted $\EchoCert(v)$ in round 2. Moreover, if a party receives a conflicting echo-certificate $\EchoCert(v')$ by the end of round 2, it does not vote in round 3. Therefore, rounds 1 and 2  prevent conflicting vote1-certificates from being created. 

Round 4 forms a set of $n-f$ $\SecondVote$ (vote2-certificate) for a value $v$, denoted $\Cert^{2}(v)$. 
If a party receives a vote1-certificate $\Cert^{1}(v)$ by the end of round 3, it sends $\SecondVote$ for a value $v$ (along with $\Cert^{1}(v)$) in round 4. Therefore, if a vote2-certificate $\Cert^{2}(v)$ is formed, all honest parties can receive a vote1-certificate $\Cert^{1}(v)$.

Finally, a party outputs a value $v$ if it receives a vote1-certificate $\Cert^{1}(v)$, and it further sets the grade bit $g$ to 1 if it also receives a vote2-certificate $\Cert^{2}(v)$. Consistency follows from the properties above. Moreover, if all honest parties have the same input value $v$, all honest parties (at least $n-f$) receive both $\Cert^{1}(v)$ and $\Cert^{2}(v)$ and output $(v, 1)$, hence validity also holds. 

\begin{figure*}[t!]
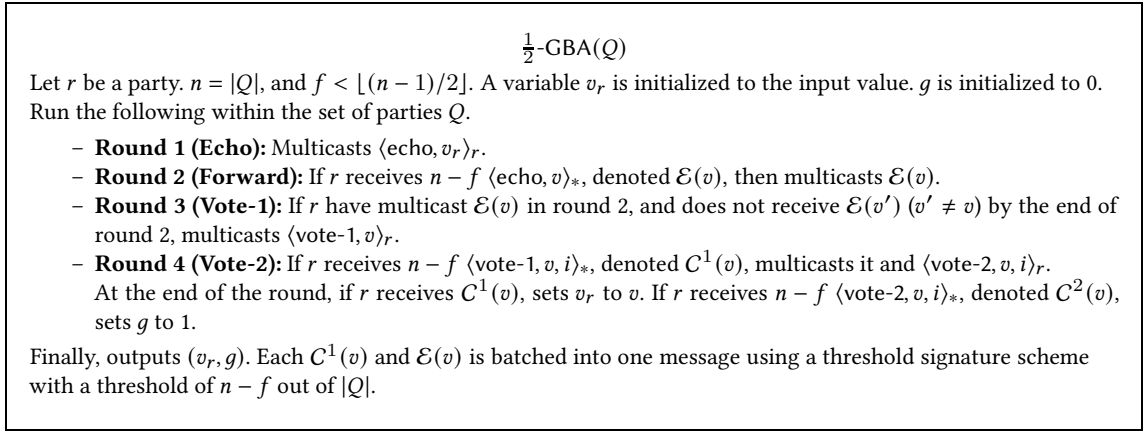

    \begin{framed}
    \begin{center} $\OptGBA(Q)$ \end{center}
    \begin{flushleft}
    Let $r$ be a party. $n = |Q|$, and $f < \lfloor (n-1)/2 \rfloor$. A variable $v_{r}$ is initialized to the input value. $g$ is initialized to 0. Run the following within the set of parties $Q$. 
    \end{flushleft}
    
    \begin{enumerate}
        \item[--] \textbf{Round 1 (Echo):} Multicasts $\langle \mbox{\sf echo},v_{r} \rangle_{r}$.
        
        \item[--] \textbf{Round 2 (Forward):} If $r$ receives $n-f$ $\langle \mbox{\sf echo},v \rangle_{*}$, denoted $\qer{}{v}$, then multicasts $\qer{}{v}$.
        
        \item[--] \textbf{Round 3 (Vote-1):} If $r$ have multicast $\qer{}{v}$ in round 2, and does not receive $\qer{}{v'}$ ($v' \neq v$) by the end of round 2, multicasts $\langle \mbox{\sf vote-1},v \rangle_{r}$. 
        
        \item[--] \textbf{Round 4 (Vote-2):} If $r$ receives $n-f$ $\langle \mbox{\sf vote-1},v,i \rangle_{*}$, denoted $\fcer{}{v}$, multicasts it and $\langle \mbox{\sf vote-2},v,i \rangle_{r}$. 
        
        At the end of the round, if $r$ receives $\fcer{}{v}$, sets $v_{r}$ to $v$. If $r$ receives $n-f$ $\langle \mbox{\sf vote-2},v,i \rangle_{*}$, denoted $\scer{}{v}$, sets $g$ to 1. 
    \end{enumerate}
    
    \begin{flushleft}
    Finally, outputs $(v_{r}, g)$. Each $\fcer{}{v}$ and $\qer{}{v}$ is batched into one message using a threshold signature scheme with a threshold of $n-f$ out of $|Q|$.
    \end{flushleft}
    \end{framed}
    \caption{Graded Byzantine agreement with $f < n/2$ with a threshold signature scheme.}
    \label{fig:graded_ba}
\end{figure*}

\bos{Correctness of the protocol.}
We prove the correctness of $\OptGBA$ assuming $f < n/2$.
The termination of $\OptGBA$ is trivial, and thus we prove the consistency and validity. 

\begin{lemma} \label{lemma:certified_without_equivocation_ba_cubic}
    If $\fcer{}{v}$ and $\fcer{}{v'}$ are both created, then $v = v'$. 
\end{lemma}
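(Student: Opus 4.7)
The plan is to derive a contradiction from assuming two distinct vote-1 certificates $\fcer{}{v}$ and $\fcer{}{v'}$ with $v \neq v'$ both exist, by locating an honest party in each certificate and exploiting the round-3 no-equivocation check.

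First I would observe that any vote-1 certificate consists of $n-f$ signed $\langle\mathsf{vote\text{-}1},\cdot\rangle$ messages, and since $f < n/2$ we have $n - 2f \geq 1$, so at least one of those signers must be honest. Let $h$ be an honest party whose vote-1 for $v$ lies in $\fcer{}{v}$, and $h'$ be an honest party whose vote-1 for $v'$ lies in $\fcer{}{v'}$. By the rule of round 3, an honest party votes for a value only after it has itself multicast an echo-certificate for that value in round 2. Hence $h$ multicast $\qer{}{v}$ and $h'$ multicast $\qer{}{v'}$ in round 2.

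Next I would use synchrony: because $h$ and $h'$ are honest and multicast their echo-certificates at the start of round 2, every honest party in $Q$ receives both $\qer{}{v}$ and $\qer{}{v'}$ by the end of round 2. In particular, the honest party $h$ receives $\qer{}{v'}$ by the end of round 2, and since $v' \neq v$, the guard in round 3 forbids $h$ from sending $\langle\mathsf{vote\text{-}1}, v\rangle_h$. This contradicts the assumption that $h$'s vote-1 for $v$ appears in $\fcer{}{v}$, completing the proof.

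The main obstacle, if any, will simply be making the honest-signer existence argument rigorous given that the certificate is produced via a threshold signature scheme rather than a literal list of $n-f$ signatures; I would note that the $n-f$ threshold together with $f < n/2$ still forces at least one honest contributor, so the same argument goes through. Everything else is a direct invocation of the round-2 forwarding rule, the round-3 guard, and the synchronous delivery assumption stated in the execution model.
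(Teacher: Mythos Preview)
Your proof is correct and follows essentially the same approach as the paper: locate an honest contributor to a vote-1 certificate, use that it multicast the corresponding echo-certificate in round~2, invoke synchronous delivery so every honest party sees that echo-certificate by the end of round~2, and conclude via the round-3 guard that no honest party can vote for a conflicting value. The only cosmetic difference is that the paper argues asymmetrically (one honest voter in $\fcer{}{v}$ suffices to block all honest votes for any $v'\neq v$, hence $\fcer{}{v'}$ cannot form), whereas you pick honest voters in both certificates and derive a direct contradiction; the underlying mechanism is identical.
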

\begin{proof}
    Suppose $\fcer{}{v}$ is created, then at least an honest party $r$ must have multicast $\SecondVote$ for $v$ in round 3. That implies $r$ received $\qer{}{v}$ and multicast it in round 2. Then, all honest parties must have received $\qer{}{v}$ by round 3, and all honest parties could not have multicast $\SecondVote$ for $v' \neq v$. Therefore, $\fcer{}{v'}$ cannot be created unless $v' = v$.
\end{proof}

\begin{lemma}[Consistency]
    If an honest party outputs $(v, 1)$, then all honest parties output $(v, *)$
\end{lemma}
\begin{proof}
    Suppose an honest party outputs $(v, 1)$, then it must have received $\scer{}{v}$ for a value $v$ by the end of round 4. Then, at least one honest party must have multicast $\fcer{}{v}$ in round 4, and all honest parties must have received it by the end of round 4. Since there is not $\fcer{}{v'}$ for a different value $v'$ by Lemma \ref{lemma:certified_without_equivocation_ba_cubic}, all honest parties set $v_r$ to $v$ at the end of round 4 and thus output $v$.
\end{proof}

\begin{lemma}[Validity]
    If all honest parties have the same input value $v$, then all honest parties output $(v, 1)$
\end{lemma}
\begin{proof}
    If all honest parties have the same input value $v$, they all multicast $\sig{\Echo, v}$ in round 1, and thus $\qer{}{v}$ should be formed and $\qer{}{v'}$ for $v' \neq v$ cannot be formed. In the same way, all honest parties multicast $\sig{\FirstVote, v}$ in round 3 and $\sig{\SecondVote, v}$ in round 4. Therefore, $\fcer{}{v}$ and $\scer{}{v}$ should be formed and $\fcer{}{v'}$ and $\scer{}{v'}$ for $v' \neq v$ cannot be formed. Thus, all honest parties output $(v, 1)$.
\end{proof}

\subsection{Graded Byzantine Agreement without Threshold Signature Scheme}

Next, we present a GBA protocol (denoted $\EpsGBA$) 
with quadratic communication and $f \le (\frac 12 - \varepsilon)n$ for any positive constant $\varepsilon$ without threshold signature scheme.
We describe $\EpsGBA$ in Figure \ref{fig:graded_ba_without_tsig}.


\begin{figure*}[t!]
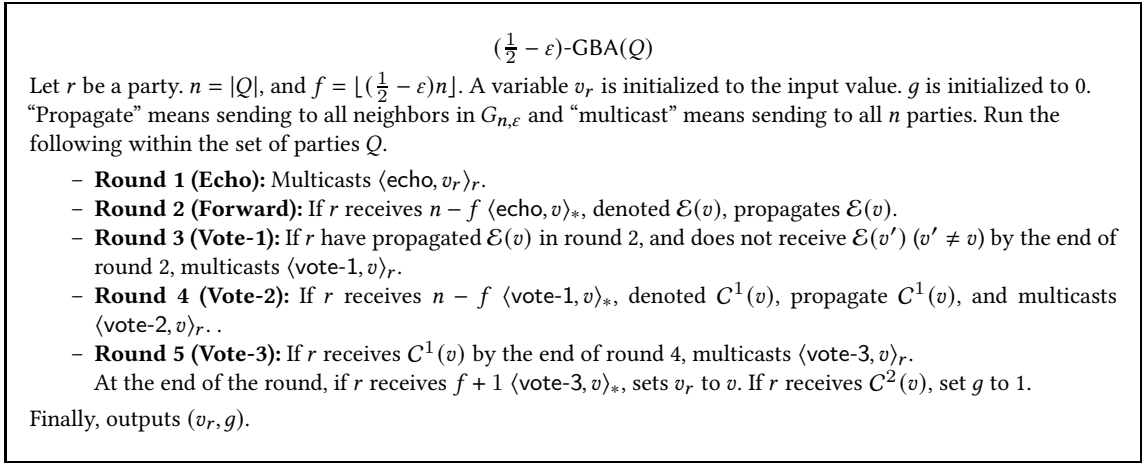

    \begin{framed}
    \begin{center} $\EpsGBA(Q)$ \end{center}
    \begin{flushleft}
    Let $r$ be a party. $n = |Q|$, and $f = \lfloor (\frac 12 - \varepsilon)n \rfloor$. A variable $v_{r}$ is initialized to the input value. $g$ is initialized to 0. ``Propagate'' means sending to all neighbors in $G_{n,\varepsilon}$ and ``multicast'' means sending to all $n$ parties. Run the following within the set of parties $Q$. \end{flushleft}
    
    \begin{enumerate}
        \item[--] \textbf{Round 1 (Echo):} Multicasts $\sig{\Echo, v_{r}}_{r}$.
        
        \item[--] \textbf{Round 2 (Forward):} If $r$ receives $n-f$ $\sig{\Echo, v}_{*}$, denoted $\EchoCert(v)$, propagates $\EchoCert(v)$.
        
        \item[--] \textbf{Round 3 (Vote-1):} If $r$ have propagated $\EchoCert(v)$ in round 2, and does not receive $\EchoCert(v')$ ($v' \neq v$) by the end of round 2, multicasts $\sig{\FirstVote, v}_{r}$.
        
        \item[--] \textbf{Round 4 (Vote-2):} If $r$ receives $n-f$ $\sig{\FirstVote, v}_{*}$, denoted $\Cert^{1}(v)$, propagate $\Cert^{1}(v)$, and multicasts $\sig{\SecondVote, v}_{r}$. .
        
        \item[--] \textbf{Round 5 (Vote-3):} If $r$ receives $\Cert^{1}(v)$ by the end of round 4, multicasts $\sig{\ThirdVote, v}_{r}$.
        
        At the end of the round, if $r$ receives $f+1$ $\sig{\ThirdVote, v}_{*}$, sets $v_{r}$ to $v$. 
        If $r$ receives $\Cert^{2}(v)$, set $g$ to 1.
    \end{enumerate}

    \begin{flushleft}
    Finally, outputs $(v_{r}, g)$.    
    \end{flushleft}

    \end{framed}
    \caption{Graded Byzantine agreement with $f \le (\frac 12 - \varepsilon)n$ without threshold signature scheme.}
    \label{fig:graded_ba_without_tsig}
\end{figure*}

\bos{Intuitive overview.}
The main motivation of $\EpsGBA$ is to remove the use of threshold signature. 
Thus, let us first review why threshold signature scheme is necessary in the GBA protocol $\OptGBA$ in the previous section.
The threshold signature scheme is used to aggregate a set of $n-f$ signatures (quorum certificate) before multicasting it specifically in two parts: (1) aggregating echo-certificate $\EchoCert(v)$ in round 2, and (2) aggregating vote1-certificate $\Cert^{1}(v)$ in round 4.
If these are not aggregated, each party needs to multicast linear-sized certificates, leading to cubic communication in total.

Therefore, to remove aggregation while keeping the communication quadratic, we need to remove multicast.
However, multicasting a quorum certificate in round 2 and 4 is the key to consistency.
Specifically, multicasting an echo-certificate $\EchoCert(v)$ in round 2 helps honest parties detect a conflicting echo-certificate $\EchoCert(v')$ and guarantees the unique existence of a vote1-certificate $\Cert^{1}(v)$, which allows honest parties to decide the value $v$ safely.
Furthermore, multicasting a vote1-certificate $\Cert^{1}(v)$ in round 4 helps notify all honest parties of the existence of $\Cert^{1}(v)$, which allows the party to decide the value $v$ with confidence, i.e., grade bit $g = 1$.

Our key new technique is to replace the multicast with a more efficient yet robust dissemination of quorum certificates through a predetermined expander graph with a constant degree.

\begin{definition}[Expander] \label{defi:expander}
    Let $\alpha$ and $\beta$ be constants satisfying $0<\alpha<\beta<1$. An $(n, \alpha, \beta)$-expander is a graph of $n$ vertices such that, for any set $S$ of $\alpha n$ vertices, the number of neighbors of $S$ is more than $\beta n$.
\end{definition}

It is well-known that for any $n$ and $0<\alpha<\beta<1$, $(n, \alpha, \beta)$-expanders with constant degrees exist. 
For our purpose, we need an $(n,2\varepsilon,1-2\varepsilon)$-expander; in other words, we set $\alpha=2\varepsilon$ and $\beta=1-2\varepsilon$.
Henceforth, we write an $(n,2\varepsilon,1-2\varepsilon)$-expander as $G_{n,\varepsilon}$.
For completeness, we show in Appendix \ref{apd:expander} that for all positive constant $\varepsilon$ and for all $n$, the required expander $G_{n,\varepsilon}$ always exists.

Instead of sending a quorum certificate to all other parties, a party propagates it to the constant set of neighbors in $G_{n,\varepsilon}$.
Therefore, the total number of messages is reduced from quadratic to linear, and thus the total communication is kept quadratic even with the message containing a linear number of signatures. 
Our key observation is that although it is impossible to propagate a message to everyone through a constant degree expander as it is not fully connected and a linear number of parties can be Byzantine, it is sufficient to maintain consistent decisions among honest parties.

In more detail, in round 3, each party multicasts $\FirstVote$ for a value $v$ only if it propagated an echo-certificate $\EchoCert(v)$ in round 2 and it does not receive a conflicting echo-certificate $\EchoCert(v')$.
If a vote1-certificate $\Cert^{1}(v)$ forms, at least $n-2f = 2\varepsilon n$ are honest.
They must have propagated $\EchoCert(v)$ and it will be received by more than $(1-2\varepsilon)n = 2f$ parties.
Out of these, at least $f+1$ are honest and will not vote for a conflicting value.
This guarantee the unique existence of vote1-certificate $\Cert^{1}(v)$.

Confirming the existence of a vote1-certificate is trickier as we cannot afford multicasts to notify all parties. We achieve this in two steps.
In round 4, after propagating $\Cert^{1}(v)$, the party multicast $\SecondVote$ for $v$.
If a vote2-certificate $\Cert^{2}(v)$ forms, due to the expansion property, at least $f+1$ honest parties receive $\Cert^{1}(v)$ by the end of round 4. Then, in round 5, if a party receives $\Cert^{1}(v)$, it multicast $\ThirdVote$ message for $v$.
As at least $f+1$ honest parties receives $\Cert^{1}(v)$, all honest parties can receive $f+1$ $\ThirdVote$ message for $v$, which works as a succinct proof of existence of $\Cert^{1}(v)$.
This allows all honest parties to confirm the existence of a vote1-certificate.

\bos{Correctness of the protocol.}
We prove the correctness of $\EpsGBA$ assuming $f \le (\frac 12 - \varepsilon)n$ for any positive constant $\varepsilon$.
The termination of $\EpsGBA$ is trivial, and thus we prove the consistency and validity. 

\begin{lemma} \label{lemma:no_conflicting_first_cert}
    If $\Cert^{1}(v)$ and $\Cert^{1}(v')$ are both created, then $v = v'$. 
\end{lemma}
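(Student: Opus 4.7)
The plan is to argue by contradiction: assume two distinct vote-1 certificates $\Cert^{1}(v)$ and $\Cert^{1}(v')$ with $v \neq v'$ both form, and derive an impossibility from the expansion property of $G_{n,\varepsilon}$.

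First, I would locate the honest cores behind each certificate. A $\Cert^{1}(v)$ aggregates $n-f$ $\FirstVote$ signatures, so by a standard counting argument at least $n - 2f = 2\varepsilon n$ honest parties multicast $\sig{\FirstVote, v}$ in round 3; call this set $S_v$, and define $S_{v'}$ analogously with $|S_{v'}| \geq 2\varepsilon n$. By the round-3 rule, every party in $S_v$ must have (i) propagated $\EchoCert(v)$ along its expander edges in round 2 and (ii) not received any conflicting $\EchoCert(v'')$ by the end of round 2, and symmetrically for $S_{v'}$. Note that $S_v$ and $S_{v'}$ are automatically disjoint, because an honest party votes for at most one value.

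Next, I would apply the expansion property. With $|S_v| \geq 2\varepsilon n$ and parameters $\alpha = 2\varepsilon$, $\beta = 1 - 2\varepsilon$, Definition~\ref{defi:expander} gives $|N(S_v)| > (1-2\varepsilon)n$, where $N(S_v)$ denotes the neighborhood of $S_v$ in $G_{n,\varepsilon}$. Since $|S_{v'}| \geq 2\varepsilon n$ as well, a pigeonhole argument on the $n$ vertices forces $N(S_v) \cap S_{v'} \neq \emptyset$. I would then pick any honest party $h$ in this intersection.

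Finally, I would close the contradiction. Because $h \in N(S_v)$, at least one neighbor of $h$ in $G_{n,\varepsilon}$ lies in $S_v$, so that neighbor transmitted $\EchoCert(v)$ to $h$ in round 2 and $h$ received it by the end of that round. But $h \in S_{v'}$ means $h$ cast $\FirstVote$ for $v'$ in round 3, which the round-3 rule forbids once a conflicting $\EchoCert(v)$ has been received. Hence $v = v'$. I expect the only subtle point to be that the pigeonhole inequality $2\varepsilon n + (1-2\varepsilon)n = n$ is tight, so the strict inequality $|N(S_v)| > (1-2\varepsilon)n$ built into Definition~\ref{defi:expander} is precisely what buys the argument; this is also the technical reason the resilience has to be stated as $f \le (\frac{1}{2}-\varepsilon)n$ with $\varepsilon>0$ a fixed constant rather than $f<n/2$.
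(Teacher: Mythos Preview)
Your proof is correct and follows essentially the same approach as the paper: both use that $\Cert^{1}(v)$ forces at least $2\varepsilon n$ honest parties to have propagated $\EchoCert(v)$, then apply the $(n,2\varepsilon,1-2\varepsilon)$-expansion to conclude that too many honest parties received $\EchoCert(v)$ to allow a conflicting $\Cert^{1}(v')$. The only cosmetic difference is that the paper counts---more than $2f$ recipients, hence at least $f+1$ honest recipients who will not vote for $v'$, hence fewer than $n-f$ possible $\FirstVote$ signatures for $v'$---whereas you pigeonhole $N(S_v)$ against $S_{v'}$ directly to exhibit a single honest party that would have to violate the round-3 rule; both arguments unwind the same inequality $|N(S_v)|+|S_{v'}|>n$.
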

\begin{proof}
    Suppose $\Cert^{1}(v)$ is created, then at least $2\varepsilon n$ honest parties must have propagated $\EchoCert(v)$ in round 2. 
    Then, due to the expansion property of $G_{n,\varepsilon}$, more than $2f$ parties, out of which at least $f+1$ honest parties must have received $\EchoCert(v)$ by the end of round 2, and do not send $\sig{\FirstVote, v'}_{*}$ for a different value $v' \neq v$ in round 3.
    Therefore, $\Cert^{1}(v')$ cannot be created unless $v' = v$.
\end{proof}

\begin{lemma}[Consistency]
    If an honest party outputs $(v, 1)$, then all honest parties output $(v, *)$
\end{lemma}
\begin{proof}
    Suppose an honest party outputs $(v, 1)$, then it must have received $\Cert^{2}(v)$ for a value $v$ by the end of round 5.
    Then, at least $2\varepsilon n$ honest parties must have propagated $\Cert^{1}(v)$ in round 4. 
    Due to the expansion property of $G_{n,\varepsilon}$, more than $2f$ parties, out of which at least $f+1$ honest parties must have received $\Cert^{1}(v)$ by the end of round 4, and multicast $\sig{\ThirdVote, v}_{*}$ in round 5.
    Thus, all honest parties must have received $f+1$ $\sig{\ThirdVote, v}_{*}$ by the end of round 5.
    Here, as $\Cert^{1}(v')$ for a different value $v' \neq v$ cannot form by Lemma \ref{lemma:no_conflicting_first_cert}, honest parties could not have multicast $\sig{\ThirdVote, v'}_{*}$.
    Therefore, all honest party could not have received $f+1$ $\sig{\ThirdVote, v'}_{*}$, and thus output $v$.
\end{proof}

\begin{lemma}[Validity]
    If all honest parties have the same input value $v$, then all honest parties output $(v, 1)$
\end{lemma}
\begin{proof}
    If all honest parties have the same input value $v$, they all multicast $\sig{\Echo, v}$ in round 1, and thus $\EchoCert(v)$ must form and $\EchoCert(v')$ for $v' \neq v$ cannot form. 
    Then, all honest parties multicast $\sig{\FirstVote, v}$ in round 3, propagate $\Cert^{1}(v)$ and multicast $\sig{\SecondVote, v}$ in round 4, and $\sig{\ThirdVote, v}$ in round 5.
    Therefore, all honest parties receive both $\Cert^{2}(v)$ and $f+1$ $\sig{\ThirdVote, v}_{*}$, and output $(v, 1)$.
\end{proof}

\section{Conclusion} \label{sec:conclusion}
In this paper, we provided two results: (1) a BA protocol with quadratic communication with optimal resilience $f < n/2$ with a trusted setup, and (2) a BA protocol with quadratic communication with near optimal resilience $f \le (\frac 12 - \varepsilon)n$ without trusted setup. Even with our new results, some gaps in worst-case communication complexity remain. For example, can we can quadratic BA under a standard PKI model with $(\frac 12 - \varepsilon)n < f < n/2$, or quadratic BB with $f \ge n/2$ even with a trusted setup? 
These are intriguing open questions for future work.

\begin{acks}
We would like to thank Zhuolun Xiang for helpful feedback.
\end{acks}

\bibliographystyle{ACM-Reference-Format}
\bibliography{main}

\appendix

\section{Expander} \label{apd:expander}
We show that an expander $G_{\varepsilon}$ in the Definition \ref{defi:expander} exists for all positive constant $\varepsilon$. We use $\Gamma(V, G)$ to denote a set of all neighbors of $V$ in a graph $G$.

    \begin{theorem}[Existence of Expander] \label{theo:expander}
        For all positive integer $n$ and positive constant $\varepsilon$, there exists an expander $G_{n,\varepsilon}$.
    \end{theorem}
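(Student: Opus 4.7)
The plan is a standard probabilistic-method argument: we construct a random bounded-degree graph on $n$ vertices and show that, with positive probability, it satisfies the $(n,2\varepsilon,1-2\varepsilon)$-expansion property, which then yields a deterministic witness $G_{n,\varepsilon}$ by derandomization / averaging. For the finitely many small values of $n$ in which the probabilistic estimate fails to bite, one can simply take the complete graph $K_n$, which trivially satisfies the expansion requirement whenever a set of size $2\varepsilon n$ is well defined.

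Concretely, I would work in the random model where each vertex independently picks $d$ uniformly random neighbors, for a constant degree $d=d(\varepsilon)$ to be chosen at the end; symmetrizing the resulting directed graph yields an undirected graph of degree at most $2d$, and the neighborhood $\Gamma(S)$ can only grow under symmetrization, so any expansion bound proved in the directed model transfers. Fix a subset $S$ of size $2\varepsilon n$. The failure event $|\Gamma(S)|\leq (1-2\varepsilon)n$ is equivalent to the existence of a ``witness'' set $T$ of size $2\varepsilon n$ with $T\cap \Gamma(S)=\emptyset$, i.e., no out-edge from $S$ to $T$. For a fixed pair $(S,T)$, the probability that none of the $d|S|$ independently chosen random neighbor slots emanating from $S$ lands in $T$ is at most $(1-|T|/n)^{d|S|} = (1-2\varepsilon)^{2\varepsilon d n}$.

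Applying a union bound over the at most $\binom{n}{2\varepsilon n}^2$ choices of pair $(S,T)$, and using the elementary estimate $\binom{n}{\alpha n}\leq (e/\alpha)^{\alpha n}$, the total failure probability is bounded by
\[
\left(\frac{e}{2\varepsilon}\right)^{4\varepsilon n}\cdot (1-2\varepsilon)^{2\varepsilon d n}.
\]
Taking logarithms, this quantity is strictly less than $1$ (indeed, vanishing as $n\to\infty$) as soon as
\[
d \;>\; \frac{2\log(e/(2\varepsilon))}{\log(1/(1-2\varepsilon))},
\]
which is a finite constant depending only on $\varepsilon$. Any such choice of $d$ yields positive success probability, from which the existence of a graph $G_{n,\varepsilon}$ with the required expansion follows for all sufficiently large $n$.

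The main obstacle here is purely bookkeeping rather than conceptual: the non-integrality of $2\varepsilon n$ and $(1-2\varepsilon)n$ (resolved by rounding, which only strengthens the claim), the directed-vs-undirected distinction (resolved by symmetrization at the cost of a factor of two in degree), and covering the finitely many small $n$ for which the asymptotic estimate fails (resolved by an explicit graph such as $K_n$). No genuinely new idea is required beyond the probabilistic method itself.
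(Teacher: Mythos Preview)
Your proof is correct and follows essentially the same probabilistic-method argument as the paper: build a random constant-degree graph, union-bound over all choices of the set $S$ and a witness set $T$ for non-expansion, and choose the degree $d=d(\varepsilon)$ large enough to drive the failure probability below $1$. The only cosmetic differences are the random model (you sample $d$ independent out-neighbors per vertex and symmetrize, whereas the paper takes the union of $d$ random perfect matchings) and the parametrization of $T$ (you take a disjoint set of size $2\varepsilon n$, the paper takes a containing set of size $(1-2\varepsilon)n$); neither changes the substance of the argument.
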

    \begin{proof}
        Let $c = 2\varepsilon$. Consider a random $d$ degree graph $G$ taking the union of random $d$ perfect matchings (if $n$ is odd, the first party has two links). In each perfect matching $P$, for any set of $c n$ parties (say $S$), and any set of $(1-c)n$ parties (say $T$), the probability that $\Gamma(S,P) \subseteq T$ is bounded above by,
        \[\Pr[\Gamma(S,P) \subseteq T] \le  (\frac{(1-c)n}{n})^{\frac{c n}{2}} = (1-c)^{\frac{c n}{2}}.\]     
        
        Thus, the probability that any set of $c n$ parties does not expand in the graph, i.e., $|\Gamma(S,G)| \le (1-c)n$ for any $S$, is bounded above by,
        \begin{eqnarray}
            & & \binom{n}{c n}\binom{n}{(1-c)n} (1-c)^{\frac{c d n}{2}} \nonumber \\
            & \le & (\frac{e}{c})^{c n}(\frac{e}{1-c})^{(1-c)n}(1-c)^{\frac{c d n}{2}} \nonumber \\
            & \le & (e (\frac{1}{c})^{c}) (\frac{1}{1-c})^{1+c(\frac{d}{2}-1)})^{n} \nonumber
        \end{eqnarray}
        
        For a sufficiently large constant $d$, the above probability is smaller than 1 (in fact, exponentially small in $n$). This means there is non-zero (in fact, overwhelmingly large) probability that a randomly chosen graph is an expander. Thus, $G_{n,\varepsilon}$ exists.
    \end{proof}

\end{document}